\documentclass{article}

\usepackage{appendix} 
\usepackage{amssymb}
\usepackage{subeqnarray}
\usepackage{bm}
\usepackage{tikz}
\usepackage{dsfont}
\usepackage{upgreek}
\usepackage{booktabs}
\usepackage{fancyhdr}
\usepackage{amsthm}
\usepackage{multirow}
\usepackage{multicol}
\usepackage{float}
\usepackage[inline]{enumitem}

\usepackage{algpseudocode}
\usepackage[ruled,vlined,linesnumbered,inoutnumbered]{algorithm2e}

\usepackage{subfigure}
\usepackage{bbding}
\usepackage{caption}
\usepackage{todonotes}
\usepackage{amsfonts}
\usepackage{mathrsfs}
\usepackage{amsmath}
\usepackage[top=2.5cm, bottom=2.5cm, left=3cm, right=3cm]{geometry} 
\usepackage{graphicx}

\setcounter{MaxMatrixCols}{30}
\providecommand{\U}[1]{\protect\rule{.1in}{.1in}}

\usepackage[breaklinks,colorlinks,linkcolor=black,citecolor=black,urlcolor=black]{hyperref}

\newtheorem{theorem}{Theorem}[section]

\newtheorem{lemma}[theorem]{Lemma}

\newtheorem{remark}{Remark}





\numberwithin{equation}{section}

\newcommand{\be}{\begin{equation}}
\newcommand{\ee}{\end{equation}}
\newcommand{\bee}{\begin{equation*}}
\newcommand{\eee}{\end{equation*}}
\newcommand{\bea}{\begin{eqnarray}}
\newcommand{\eea}{\end{eqnarray}}
\newcommand{\beaa}{\begin{eqnarray*}}
\newcommand{\eeaa}{\end{eqnarray*}}

\newcommand{\jkh}[1]{\left\langle#1\right\rangle}

\allowdisplaybreaks
\graphicspath{ {./results/} }
\definecolor{Gray}{rgb}{0.5,0.5,0.5}

\makeatletter

\newcommand{\Rmnum}[1]{\expandafter\@slowromancap\romannumeral #1@}
\makeatother

\begin{document}

\title{The Variable Volatility Elasticity Model from Commodity Markets\footnotemark[1]}


\author{Fuzhou Gong\footnotemark[2] $^,$\footnotemark[3]
	\and
	Ting Wang\footnotemark[2] $^,$\footnotemark[3]
}

\renewcommand{\thefootnote}{\fnsymbol{footnote}}
\footnotetext[1]{The research was supported in part by the National Natural Science Foundation of China (No. 11688101).}
\footnotetext[2]{Academy of Mathematics and Systems Science, Chinese Academy of Sciences, Beijing, China.}
\footnotetext[3]{School of Mathematical Sciences, University of Chinese Academy of Sciences, Beijing, China.}

\date{} 
\maketitle

\begin{abstract}
	
	In this paper, we propose and study a novel continuous-time model, 
	based on the well-known constant elasticity of variance (CEV) model,
	to describe the asset price process.
	The basic idea is that the volatility elasticity of the CEV model can not be treated as a constant 
	from the perspective of stochastic analysis.
	To address this issue, we deduce the price process of assets 
	from the perspective of volatility elasticity, 
	propose the constant volatility elasticity (CVE) model,
	and further derive a more general variable volatility elasticity (VVE) model. 
	Moreover, our model can describe the positive correlation between volatility and asset prices
	existing in the commodity markets,
	while CEV model can only describe the negative correlation.
	Through the empirical research on the financial market, 
	many assets, especially commodities, 
	often show this positive correlation phenomenon in some time periods, 
	which shows that our model has strong practical application value. 
	Finally, we provide the explicit pricing formula of European options 
	based on our model. 
	This formula has an elegant form convenient to calculate,
	which is similarly to the renowned Black-Scholes formula
	and of great significance to the research of derivatives market.

\end{abstract}

\section{Introduction}

\label{sec:introduction}

	The continuous-time model is a fundamental and powerful tool 
	in the theoretical research of financial markets,
	where stochastic techniques play a prominent part.
	For example, in option pricing, hedging and portfolio research,
	stochastic analysis techniques are usually employed to derive explicit solutions or closed-form expressions,
	which have a strong guiding significance for the practical applications.

	As early as 1965, Samuelson proposed to 
	describe the change process of stock price by geometric Brownian motion \cite{samuelson}, 
	which established the classic continuous time financial model so far.
	In 1973, Black and Sholes used the Ito formula in stochastic analysis 
	to deduce an explicit European option pricing formula based on this model, 
	that is, the classical Black-Scholes formula \cite{Black1973}. 
	This work made a breakthrough in the development of option pricing theory. 
	However, empirical data shows that ``volatility smile" or skew often appears in most of stock markets \cite{Hagan2002}. 
	Therefore, a large number of researches are devoted to proposing more complex continuous time financial models to correct the volatility bias, 
	such as constant elasticity of variance (CEV) model \cite{Cox1976}, stochastic volatility model \cite{Hull1987}, 
	V.G. model \cite{Madan1990}, GARCH model \cite{Bollerslev1986}, and so on. 

	Recently, 
	the classical CEV model has attracted a lot of attentions 
	due to its wide applications in financial markets. 
	In this model, the volatility is negatively correlated with the price.
	Thus, it can describe the derivatives and volatility of stocks 
	with this reverse phenomenon \cite{beckers1980constant}.
	In addition, many extensions based on the CEV model are proposed,
	such as \cite{carr2006jump,Emanuel1982,heath2002consistent}.
	However, we find that there is a problem with the argument 
	of constant elasticity in the CEV model.
	Before discussing related issues in details,
	we first introduce the concepts of volatility and elasticity.  
	 
%
	 
\subsection{Volatility and Elasticity}

	The volatility of financial assets is a key measure for the uncertainty of its return rate, 
	which represents the risk level of financial assets. 
	Mathematically, it is defined as the conditional standard deviation of the return rate, 
	which can be expressed as follows.
	\begin{equation*}
		v_t = \sqrt{ \mathrm{Var} \left(  r_{t+1} \mid \mathcal{F}_t \right) }.
	\end{equation*}
	Here, $ r_{t} $ represents the rate of return at time $t$, 
	and the filtration $ (\mathcal{F}_t) $ covers all the information in the market before time $t$. 
	In this paper, we assume that the market is efficient, that is, the asset price completely reflects all the information in the market.
	The volatility plays an important role in many economic decisions of financial markets,
	such as risk management, derivatives pricing, and portfolio decision-making. 
	In addition, derivatives based on the volatility index are becoming more and more active in the market, 
	which can provide investors with a variety of investment and hedging tools, 
	help to avoid risks and reduce the irrational volatility of the market \cite{Black2006}. 
	In developed financial markets, 
	a variety of volatility indices and their derivatives are widely traded,
	such as futures, options and ETFs based on CBOE Volatility Index (VIX). 
	Therefore, the study of volatility is of great significance in both practice and theory.

	The concept of elasticity, first introduced by Marshall \cite{marshall2013elasticity}, 
	has been widely used in economics.
	It is an important measure of how sensitive an economic factor is to another, 
	for example, changes in supply or demand to the change in price, 
	or changes in demand to changes in income.
	Moreover, the elasticity can help us monitor the risk exposed in the stock markets.
	In this paper, we focus on the elasticity of the volatility 
	for an asset with respect to its price,
	which is called {\it volatility elasticity} and denoted by $\lambda_t$.

	According to the definition of elasticity, 
	$\lambda_t$ measures the sensitivity of volatility to a change in the asset price,
	which can be expressed as follows.
	\begin{equation*}
		\lambda_t =  \left. \frac{ \bigtriangleup v_t }{v_t} \middle/ \frac{ \bigtriangleup S_t }{S_t}. \right.
	\end{equation*}
	In essence, the asset price and its volatility vary according to the market conditions.
	More specifically, with the passage of time, 
	various event information in the market continues to occur, 
	making asset prices fluctuate constantly.
	Therefore, the common driving force of these two stochastic processes is the time scale $t$.
	As $ \bigtriangleup t \to 0$, the continuous formula of volatility elasticity becomes the following form.
	\begin{equation}
		\lambda_t = \left. \frac{ d v_t }{v_t} \middle/ \frac{ d S_t }{S_t}, \right.
	\end{equation}
	where $ d v_t $ and $ d S_t $ denote the stochastic differential of $ v_t $ and $ S_t $, respectively.
	
\subsection{Existing Problems in the CEV Model}

	Let $S_t$ denote the price of the risk asset.
	In the CEV model \cite{Cox1976}, 
	the price process satisfies the following stochastic differential equation (SDE).
	\begin{equation}
		dS_t = S_t \left[\mu dt + \sigma S_t^{\beta / 2 - 1} dB_t \right],
	\end{equation}
	where $\mu$ and $\sigma$ are two constants,
	$\beta \in (0,2]$ is called the elastic factor,
	and $B_t$ represents the Brown motion.
	It has been shown in \cite{Cox1976} that
	the volatility of the asset price is $v_t = \sigma S_t^{\beta / 2 - 1}$
	and the volatility elasticity is equal to a constant $\beta / 2 - 1$.
	We note that these results are derived in the deterministic sense, 
	that is, $v_t$ and $S_t$ are assumed to be deterministic variables and the derivative of $v_t$ with respect to $S_t$ is computed straightforwardly.		
 	However, according to the definition of elasticity in economics,
    $v_t$ and $S_t$ here are random processes intrinsically in the real markets.
    We should use stochastic differential to deal with the volatility elasticity,
    which can not be simply regarded as a constant.
	Consequently, the main goal of this paper is to devise a new model 
	based on the stochastic analysis to describe the real markets.
	
\subsection{Contribution}
	
	In this paper, we design two SDE models to describe the asset price 
	by employing  stochastic differential tools.
	The first model is based on the assumption that the volatility elasticity is constant,
	which is called constant volatility elasticity (CVE) model.
	And we deduce the second model, 
	which is called variable volatility elasticity (VVE) model,
	by assuming that the elasticity is related to price.
	Through the analysis of the actual market data, 
	our models describe the price process of some special assets well,
	especially the commodities, 
	which provide great help for forecasting 
	or guiding economic decision-making tasks in these markets. 
	Finally, we provide an explicit option pricing formula based on the proposed models, 
	which is of great value to the practical application of the derivatives markets.
	
\subsection{Organization}

	The rest of this paper is organized as follows.
	In Section~\ref{sec:CVE}, we deduce the stochastic differential equation model 
	of asset price process for the case that the volatility elasticity is constant.
	Then this model is extended to the generic case 
	when the volatility elasticity is time-varying in Section~\ref{sec:VVE}.
	The potential of our proposed models is verified from 
	empirical research of actual market data in Section~\ref{sec:analysis}.
	In addition, an explicit pricing formula of European options 
	is presented in Section~\ref{sec:pricing}.
	Finally, we give the conclusion of this paper in the last section.

\section{Constant Volatility Elasticity (CVE) Model}
  
\label{sec:CVE}
	
	In this paper, we consider the scenario that 
	the stock market is composed of two kinds of assets: risk and risk-free.
	Without loss of generality, we assume that 
	the price process of assets in the market is a time-homogeneous Markov process 
	and the assets do not pay dividends. 
	Suppose the price process of the risk-free asset satisfies
	\begin{equation}
	\label{eq:riskfree}
	d\beta_t = r \beta_t dt,
	\end{equation}
	where $ r $ is the risk-free rate.

	In the financial market, the stock volatility is time-varying 
	and often presents volatility clustering 
	and thick tail of logarithmic price return distribution \cite{Cont2007volatility}. 
	And according to the efficient market hypothesis and assumption of time-homogeneous Markov property of price process,
	the volatility can be expressed as a function of price
	by its definition.
	Let $S_t$ denote the price of the risk asset.
	Then we assume that $S_t$ satisfies the following stochastic process.
	\begin{equation}
		\label{eq:assume}
		dS_t = S_t \left[\mu dt + \theta (S_t) dB_t \right],
	\end{equation}
	where $ S_0 > 0$ and $ \mu > 0 $ is the instantaneous expected rate of return.
	We assume that $\theta (S_t)$ satisfies some appropriate conditions 
	to guarantee the existence of solutions. 
	It can be readily verified that $ \theta (S_t) $ is exactly the volatility function of $S_t$.
	Moreover, the instantaneous rate of return of risk assets is simply assumed to be constant
	for the sake of convenience.
	In fact, our conclusion can easily be extended to general cases,
	which will be discussed in Remark~\ref{remark}.
	

	Now, based on the assumption that the volatility elasticity is constant,
	it can be derived that $\theta (S_t)$ admits a closed-form solution.

	\begin{theorem}
		\label{thm:constant}
		Suppose the price of a risk asset satisfies the SDE \eqref{eq:assume}, 
		$\theta(x)$ is twice continuously differentiable,
		and the volatility elasticity is constant.
		Then it holds that
		\begin{equation}
			\theta (S_t) = C S_t^{\alpha},
		\end{equation}
		where  $\alpha \in \{0, 1\}$ is the value of volatility elasticity, and $C$ is a positive constant.
	\end{theorem}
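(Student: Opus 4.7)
The plan is to interpret ``constant volatility elasticity'' in the stochastic sense introduced in the introduction, namely that the ratio $dv_t/v_t$ divided by $dS_t/S_t$ (understood as a ratio of the drift and diffusion coefficients of the respective relative increments) is a fixed constant $\alpha$. Since $v_t=\theta(S_t)$, the first move is to apply Itô's formula to $\theta(S_t)$, using the SDE \eqref{eq:assume} for $S_t$, to get an expression of the form
\begin{equation*}
d\theta(S_t) = \left[\mu S_t \theta'(S_t) + \tfrac{1}{2} S_t^2 \theta(S_t)^2 \theta''(S_t)\right] dt + S_t\, \theta'(S_t)\, \theta(S_t)\, dB_t.
\end{equation*}
Dividing by $v_t=\theta(S_t)$ and comparing with $dS_t/S_t = \mu\, dt + \theta(S_t)\, dB_t$ gives explicit formulas for the drift and diffusion coefficients of $dv_t/v_t$ in terms of $\theta, \theta', \theta''$.

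Next, I would argue that requiring $(dv_t/v_t)/(dS_t/S_t) \equiv \alpha$ forces the matching of the two Brownian coefficients and the two drift coefficients separately (otherwise the ratio would not be deterministic, let alone constant). Matching the $dB_t$ coefficients yields the linear first-order ODE
\begin{equation*}
S_t\, \theta'(S_t) = \alpha\, \theta(S_t),
\end{equation*}
whose general solution on $(0,\infty)$ is $\theta(x) = C x^{\alpha}$ for some constant $C$; since $\theta$ plays the role of a (nontrivial) volatility function, $C$ must be taken positive.

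The final step is to substitute $\theta(x)=Cx^\alpha$ back into the drift-matching equation
\begin{equation*}
\mu S_t\, \theta'(S_t) + \tfrac{1}{2} S_t^2 \theta(S_t)^2 \theta''(S_t) = \alpha \mu\, \theta(S_t),
\end{equation*}
compute $\theta'(x)=C\alpha x^{\alpha-1}$ and $\theta''(x)=C\alpha(\alpha-1)x^{\alpha-2}$, and observe that after cancellation the only remaining term is $\tfrac{1}{2}C^3\alpha(\alpha-1) x^{3\alpha}$, which must vanish identically. This forces $\alpha(\alpha-1)=0$, i.e.\ $\alpha\in\{0,1\}$, completing the proof.

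The substantive point, rather than the main calculational obstacle, is the first one: justifying that the stochastic definition of elasticity really does demand separate matching of the $dt$ and $dB_t$ coefficients. Once one accepts this reading (which is the whole philosophical motivation of the paper for distinguishing CVE from CEV), the Itô expansion and the ODE are routine, and the sharp conclusion $\alpha\in\{0,1\}$ drops out precisely because the $\theta''$-term in Itô's formula introduces an algebraic constraint absent from the deterministic computation that underlies the classical CEV derivation.
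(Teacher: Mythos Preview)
Your proposal is correct and follows essentially the same approach as the paper: apply It\^o's formula to $v_t=\theta(S_t)$, impose $dv_t/v_t=\alpha\,dS_t/S_t$, match the diffusion and drift parts separately to obtain the ODE $x\theta'(x)=\alpha\theta(x)$ and the algebraic constraint, and conclude $\alpha(\alpha-1)=0$. The only cosmetic difference is that the paper works with the drift equation \emph{after} dividing by $\theta(S_t)$, so its residual term is $\tfrac{1}{2}C^{2}\alpha(\alpha-1)S_t^{2\alpha}$ rather than your $\tfrac{1}{2}C^{3}\alpha(\alpha-1)x^{3\alpha}$; the conclusion is identical, and your explicit remark that the separate matching of $dt$ and $dB_t$ coefficients is the substantive interpretive step is, if anything, slightly more careful than the paper's presentation.
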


	\begin{proof}
		Since the price satisfies the SDE \eqref{eq:assume},
		we know that $\theta(S_t)$ is the volatility of $S_t$.
		Then we apply the It$\hat{ \text{o} }$'s rule to $\theta(S_t)$ and obtain that
		\begin{align*}
			dv_t & = d\theta (S_t) \\
			& = \theta^{\prime} (S_t) dS_t + \frac{1}{2}\theta^{\prime \prime} (S_t) d\jkh{ S }_t \\
			& = \theta^{\prime} (S_t) S_t \left[\mu dt + \theta (S_t) dB_t \right] 
			+ \frac{1}{2}\theta^{\prime \prime} (S_t) [S_t \theta (S_t)]^2 dt \\
			& = \theta (S_t) \left[  \left( \frac{ \theta^{\prime} (S_t) }{\theta(S_t)}  S_t\mu 
			+ \frac{1}{2}\theta^{\prime \prime} (S_t) S_t^2 \theta (S_t) \right) dt + \theta^{\prime} (S_t) S_t dB_t \right],
		\end{align*}
		which yields that
		\begin{equation}
			\frac{ dv_t }{v_t} = \left( \frac{ \theta^{\prime} (S_t) }{\theta(S_t)}  S_t\mu 
			+ \frac{1}{2}\theta^{\prime \prime} (S_t) S_t^2 \theta (S_t) \right) dt + \theta^{\prime} (S_t) S_t dB_t.
		\end{equation}
		We assume that the volatility elasticity is a constant $\alpha$.
		According to the definition of elasticity, it follows that 
		\begin{equation}
			\frac{ dv_t }{v_t} = \alpha\frac{ dS_t }{S_t},
		\end{equation}
		which further implies that
		\begin{equation}
			\left( \frac{ \theta^{\prime} (S_t) }{ \theta(S_t) }  S_t\mu 
			+ \frac{1}{2}\theta^{\prime \prime} (S_t) S_t^2 \theta (S_t) \right) dt + \theta^{\prime} (S_t) S_t dB_t 
			= \alpha \left[\mu dt + \theta (S_t) dB_t \right] ,
		\end{equation}
		Therefore, we can deduce that 
		\begin{equation}
		\label{eq:solvetheta1}
			\theta^{\prime} (S_t) S_t = \alpha \theta (S_t),
		\end{equation}
		and
		\begin{equation}
		\label{eq:solvetheta2}
			\frac{ \theta^{\prime} (S_t) }{ \theta(S_t) }  S_t\mu 
			+ \frac{1}{2}\theta^{\prime \prime} (S_t) S_t^2 \theta (S_t) 
			= \alpha \mu.
		\end{equation}
		Clearly, \eqref{eq:solvetheta1} is an ordinary differential equation (ODE),
		and its solution is 
		\begin{equation}
		\label{eq:theta1}
			\theta(S_t) = C S_t^\alpha,
		\end{equation}
		where $C$ is a positive constant.
		Substituting \eqref{eq:theta1} to \eqref{eq:solvetheta2}, we have
		\begin{equation*}
	    	\frac{1}{2} \alpha (\alpha - 1) C^2 S_t^{2 \alpha} = 0.
		\end{equation*}
		Since $S_t$ is not identically equal to $0$, we can conclude that
		\begin{equation}
			\alpha = 0, \text{~or~} \alpha = 1, \text{~or~} C = 0.
		\end{equation}
		If $C = 0$, the SDE \eqref{eq:assume} becomes $dS_t = S_t \mu dt$. 
		This means that $S_t$ is the price of a risk-free asset, 
		which contradicts the assumption.
		Therefore, we can acquire that
		\begin{equation}
			\theta (S_t) = C S_t^{\alpha},
		\end{equation}
		where $\alpha \in \{0, 1\}$, and $C$ is a positive constant.
	\end{proof}

	Theorem 3.1 shows that when the volatility elasticity is constant, 
	this value can only be 0 or 1. 
	And the price of the risk asset can be expressed as
	\begin{equation}
	\label{eq:bsmodel}
		dS_t = S_t \left[\mu dt + C dB_t \right],
	\end{equation}
	or 
	\begin{equation}
	\label{eq:squaremodel}
		dS_t = S_t \left[\mu dt + C S_t dB_t \right].
	\end{equation}
	
	The first model \eqref{eq:bsmodel} is nothing but the 
	classical Black-Scholes model.
	In this case, the volatility of asset is a constant $\sigma$ and 
	the volatility elasticity is obviously 0. 
	We will not discuss more details about this trivial case in this paper.

	The second model \eqref{eq:squaremodel} is similar to the CEV model in form, 
	which corresponds to the case that $ \beta = 4$. 
	However, it should be noted that $\beta$ lies in $(0, 2]$ in the CEV model \cite{cox1996constant}.
	Therefore, our model is not a special case of it.
	Since the volatility elasticity is assumed to be constant, 
	\eqref{eq:squaremodel} is called Constant Volatility Elasticity (CVE) model.
	
	\begin{remark}
	\label{remark}
		In the proof of Theorem \ref{thm:constant},
		the constant $\mu$ can be replaced by an adaptive process $\mu(t)$ 
		and the conclusion will not change.
	\end{remark}

\section{Variable Volatility Elasticity (VVE) Model}

\label{sec:VVE}
	
	In the financial market, the return series of risk asset prices 
	usually have the phenomenon of
	excessive volatility and volatility clustering \cite{Cont2007volatility}.
	A large number of models tries to capture this phenomenon,
	such as stochastic volatility model \cite{Hull1987}, 
	GARCH \cite{Bollerslev1986}, and EGARCH \cite{nelson1991conditional}.
	In the real market, the characteristics of volatility series 
	are often complex and elusive. 
	Therefore, it is too trivial to assume that the elasticity of volatility 
	with respect to price is a constant.
	
	In this section, we will discuss the stochastic differential equation model 
	when the volatility elasticity changes with market information. 
	Recall that the market is assumed to be efficient,  
	that is, the market information flow is generated by the asset price.
	Therefore, we naturally assume that the volatility elasticity depends on 
	the asset price.
	Then, the following theorem derives the resulting model.

	\begin{theorem}
		\label{thm:nonconstant}
		Suppose the price of a risk asset satisfies the SDE \eqref{eq:assume}, 
		$\theta(x)$ is twice continuously differentiable,
		and the volatility elasticity depends on the price. 
		Then it holds that
		\begin{equation}
		\label{eq:mainmodel}
		dS_t = S_t \left[\mu dt +  \left(\sigma + C_1 S_t\right) dB_t \right].
		\end{equation}
		where $\sigma$ and $C_1$ are two positive constants.
	\end{theorem}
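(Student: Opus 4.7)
My plan is to mimic the argument used in Theorem~\ref{thm:constant}, but with the scalar $\alpha$ replaced by an a priori unknown function $\lambda(\cdot)$ of the price. First I would apply It\^o's formula to $v_t=\theta(S_t)$ under the dynamics~\eqref{eq:assume}; as in the CVE derivation, this yields
\begin{equation*}
\frac{dv_t}{v_t} = \left(\frac{\theta'(S_t)}{\theta(S_t)} S_t \mu + \frac{1}{2}\theta''(S_t) S_t^2 \theta(S_t)\right) dt + \theta'(S_t) S_t\, dB_t.
\end{equation*}

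Next, I would encode the hypothesis ``the volatility elasticity depends on the price'' as $\lambda_t=\lambda(S_t)$ for some function $\lambda$, so that the defining identity $dv_t/v_t=\lambda_t\,dS_t/S_t$ rewrites as
\begin{equation*}
\frac{dv_t}{v_t}=\lambda(S_t)\bigl[\mu\,dt+\theta(S_t)\,dB_t\bigr].
\end{equation*}
Equating the $dB_t$-coefficients on both sides yields $S_t\theta'(S_t)=\lambda(S_t)\theta(S_t)$, which gives $\lambda(S_t)=S_t\theta'(S_t)/\theta(S_t)$ directly in terms of $\theta$. Equating the $dt$-coefficients yields
\begin{equation*}
\frac{\theta'(S_t)}{\theta(S_t)}S_t\mu+\frac{1}{2}\theta''(S_t)S_t^2\theta(S_t)=\lambda(S_t)\mu.
\end{equation*}
Substituting the first identity into the second makes the drift terms cancel and leaves $\frac{1}{2}\theta''(S_t)S_t^2\theta(S_t)=0$.

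From here the ODE analysis is immediate: since $S_t>0$ almost surely and $\theta(S_t)>0$ as a volatility, the support of $S_t$ forces $\theta''(x)\equiv 0$, so $\theta$ is affine, $\theta(x)=\sigma+C_1 x$, which plugged back into \eqref{eq:assume} gives the stated VVE dynamics~\eqref{eq:mainmodel}. To finish, I would rule out the degenerate cases: if $C_1=0$ then $\theta$ is constant and the elasticity vanishes identically, while if $\sigma=0$ then $\theta(x)=C_1 x$ and the elasticity equals the constant $1$; both contradict the standing assumption that the elasticity genuinely varies with price, so $\sigma,C_1>0$ (positivity of $\theta$ on $(0,\infty)$ also forces both signs to agree).

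The main delicate point is the legitimacy of the coefficient-matching step. Identifying the $dt$ and $dB_t$ parts of two continuous semimartingales is not merely a symbolic manipulation; it requires the uniqueness of the Doob--Meyer decomposition, i.e.\ that a continuous local martingale of finite variation is constant. Once this is invoked to justify the two scalar identities, the remainder of the argument reduces to the one-line conclusion $\theta''\equiv 0$ and straightforward bookkeeping, so no further obstacle arises.
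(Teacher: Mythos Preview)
Your argument is correct and in fact cleaner than the paper's. Both proofs start identically: apply It\^o to $v_t=\theta(S_t)$, posit $\lambda_t=\alpha(S_t)$, and match the $dt$- and $dB_t$-coefficients to obtain
\[
S_t\theta'(S_t)=\alpha(S_t)\theta(S_t),\qquad
\frac{\theta'(S_t)}{\theta(S_t)}S_t\mu+\tfrac12\theta''(S_t)S_t^2\theta(S_t)=\alpha(S_t)\mu.
\]
From here the routes diverge. The paper treats the first relation as an ODE for $\theta$ with $\alpha$ given, writes $\theta(x)=C_2\exp\bigl\{\int_0^x \alpha(z)/z\,dz\bigr\}$, differentiates twice, substitutes into the second relation to obtain the Bernoulli-type equation $\alpha^2(x)+x\alpha'(x)-\alpha(x)=0$, solves it as $\alpha(x)=C_3x/(1+C_3x)$, and only then recovers $\theta(x)=C_2(1+C_3x)$. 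You instead read the first relation as $\alpha=S\theta'/\theta$, plug this into the second, and watch the $\mu$-terms cancel to leave $\theta''\equiv 0$ directly. Since $\alpha^2+x\alpha'-\alpha=x^2\theta''/\theta$ identically, the two computations are equivalent, but yours bypasses the integral representation and the nonlinear ODE entirely. Your handling of the degenerate cases ($C_1=0$ or $\sigma=0$ forcing constant elasticity) and the remark on Doob--Meyer for the coefficient matching are also more explicit than what the paper provides.
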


	\begin{proof}
		According to the proof of Theorem \ref{thm:constant}, 
		we can obtain that the volatility of $S_t$ is $ \theta(S_t) $ and 
		\begin{equation}
			\frac{ dv_t }{v_t} = \left( \frac{ \theta^{\prime} (S_t) }{ \theta(S_t) }  S_t\mu 
			+ \frac{1}{2}\theta^{\prime \prime} (S_t) S_t^2 \theta (S_t) \right) dt + \theta^{\prime} (S_t) S_t dB_t.
		\end{equation}
		Suppose the volatility elasticity is $\alpha (S_t)$, namely,
		\begin{equation}
			\frac{ dv_t }{v_t}= \alpha(S_t) \frac{ dS_t }{S_t},
		\end{equation}
		which infers that
		\begin{equation}
			\left( \frac{ \theta^{\prime} (S_t) }{ \theta(S_t) }  S_t\mu 
			+ \frac{1}{2}\theta^{\prime \prime} (S_t) S_t^2 \theta (S_t) \right) dt + \theta^{\prime} (S_t) S_t dB_t 
			= \alpha (S_t) \left[\mu dt + \theta (S_t) dB_t \right].
		\end{equation}
		Therefore, we can deduce that 
		\begin{equation}
		\label{eq:solvetheta21}
			\theta' (S_t) S_t = \alpha(S_t) \theta (S_t),
		\end{equation}
		and
		\begin{equation}
		\label{eq:solvetheta22}
			\frac{ \theta'(S_t) }{ \theta(S_t) }  S_t\mu + \frac{1}{2}\theta''(S_t) S_t^2 \theta (S_t) =  \alpha(S_t) \mu.
		\end{equation}
		By solving the ODE \eqref{eq:solvetheta21}, we can obtain that
		\begin{equation}
		\label{eq:theta21}
			\theta(S_t) = C_2 \exp \left\{ \int_{0}^{S_t} \frac{\alpha (z)}{z} dz \right\},
		\end{equation}
		where $C_2$ is a positive constant. 
		Now it is straightforward to verify that
		\begin{equation}
		\label{eq:dtheta}
			\theta^{\prime} (S_t) = C_2 \exp \left\{ \int_0^{S_t} \frac{\alpha (z)}{z} dz \right\} \cdot \frac{\alpha (S_t)}{S_t},
		\end{equation}
		and
		\begin{equation}
		\label{eq:ddtheta}
			\theta^{\prime \prime} (S_t) = C_2 \exp \left\{ \int_0^{S_t} \frac{\alpha (z)}{z} dz \right\} 
			\cdot \frac{\alpha^2 (S_t) + \alpha^{\prime} (S_t) S_t - \alpha(S_t)}{S_t^2}.
		\end{equation}
		Substituting the above three relationships into \eqref{eq:solvetheta22} yields that
		\begin{equation*}
			\frac{1}{2} \theta^2 (S_t) \left[ \alpha^2 (S_t) + \alpha^{\prime} (S_t) S_t - \alpha (S_t) \right] = 0.
		\end{equation*}
		which combining the fact that $\theta (S_t) > 0$ implies that
		\begin{equation*}
			\alpha^2 (S_t) + \alpha^{\prime} (S_t) S_t - \alpha (S_t) = 0.
		\end{equation*}
		By solving the above ODE, we can obtain that
		\begin{equation}
			\alpha (S_t) = \frac{C_3 S_t}{1 + C_3 S_t}, 
		\end{equation}
		where $C_3$ is a positive constant.
		Therefore, we can further conclude that
		\begin{align*}
			\theta (S_t) & = C_2 \exp \left\{ \int_0^{S_t} \frac{\alpha (z)}{z} dz \right\}\\
			& = C_2 ( 1 + C_3 S_t).
		\end{align*}
		The proof is completed by setting $C_1 = C_2 C_3$ and $\sigma = C_2$.
	\end{proof}

	Theorem \ref{thm:nonconstant} establishes the model for the case that 
	the volatility elasticity is not a constant and depends on the price.
	Therefore, \eqref{eq:mainmodel} is called Variable Volatility Elasticity (VVE) model.
	It is obvious that \eqref{eq:squaremodel} is a special case of \eqref{eq:mainmodel} 
	for $\sigma = 0$.
	Therefore, in the sequel, we will focus on the VVE model.
	
\subsection{Existence of Solutions}

	Now, we verify the existence of solutions to the SDE \eqref{eq:mainmodel} 
	in the following lemma, 
	which proves that the proposed model is well-defined.

\begin{lemma}
	\label{lem:solution}
	There exists a solution to the SDE \eqref{eq:mainmodel}, which can be expressed as: 
	\begin{equation}
	S_t = \frac{ \sigma S_0 e^{ \gamma t +\sigma B_t}  }{ \left( \frac{ \mu }{ \gamma } -1\right) C_1 S_0 e^{ \gamma t + \sigma B_t } -  \frac{ \mu }{ \gamma } C_1 S_0 e^{\sigma B_t} + \sigma + C_1 S_0 },
	\end{equation}
	where $\gamma = \mu - \frac{\sigma^2}{2} $.
\end{lemma}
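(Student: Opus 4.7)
The plan is to treat the candidate expression as a smooth function of $t$ and $B_t$, and verify that it satisfies \eqref{eq:mainmodel} by a direct application of It\^o's formula. Write $S_t = N_t/D_t$ where
\[
N_t = \sigma S_0 e^{\gamma t + \sigma B_t}, \qquad D_t = \dkh{\frac{\mu}{\gamma} - 1} C_1 S_0 e^{\gamma t + \sigma B_t} - \frac{\mu}{\gamma} C_1 S_0 e^{\sigma B_t} + \sigma + C_1 S_0.
\]
Since each summand of $N_t$ and $D_t$ is an explicit exponential of $(t, B_t)$, I would compute $dN_t$, $dD_t$, $d[N]_t$, $d[D]_t$, and $d[N,D]_t$ component by component. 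Using the identity $\gamma + \sigma^2/2 = \mu$, the numerator satisfies the clean GBM relation $dN_t = N_t(\mu\,dt + \sigma\,dB_t)$; the three pieces of $D_t$ give an analogous affine expression in the two exponential processes. Then apply the It\^o quotient formula $d(N/D) = D^{-1}\,dN - N D^{-2}\,dD - D^{-2}\,d[N,D]_t + N D^{-3}\,d[D]_t$ and collect terms.

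The crux of the verification is algebraic: to show that the resulting drift coefficient simplifies to $\mu S_t$ and the diffusion coefficient simplifies to $S_t(\sigma + C_1 S_t)$. The key cancellation that I expect to carry the proof is the observation that, with the coefficients chosen as in $D_t$, one has the closed identity
\[
\sigma\,D_t + C_1 N_t = \sigma(\sigma + C_1 S_0) \cdot \dkh{\text{integrating factor in } t, B_t},
\]
so that the $dB_t$ contributions coming from $D_t$ collapse against those from $N_t$ exactly into $S_t(\sigma + C_1 S_t)$. A parallel (but longer) cancellation between the quadratic variation correction $N D^{-3} d[D]_t - D^{-2} d[N,D]_t$ and the It\^o drift of $N_t/D_t$ reduces the drift to $\mu N_t/D_t = \mu S_t$. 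The particular weights $\mu/\gamma - 1$ and $-\mu/\gamma$ appearing in $D_t$ are precisely what balance the three exponential contributions after dividing out $D_t$.

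An alternative derivation route, which motivates the above ansatz, is to substitute $Y_t = 1/S_t$ and compute $dY_t$ by It\^o; this yields an SDE containing a nonlinear $C_1^2/Y_t$ term alongside the affine pieces $(\sigma^2 - \mu)Y_t + 2\sigma C_1$ and the affine diffusion $-(\sigma Y_t + C_1)\,dB_t$. The right-hand side suggests looking for a solution of the form $Y_t = \alpha + \beta e^{-\gamma t} + \varepsilon\, e^{-\gamma t - \sigma B_t}$ with three constants $\alpha, \beta, \varepsilon$ determined by matching drift, diffusion, and the initial condition $Y_0 = 1/S_0$. Inverting this $Y_t$ produces exactly the quotient $N_t/D_t$ in the statement.

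The main obstacle I anticipate is the bookkeeping in the quotient-rule computation: although every individual differential is straightforward, the four terms of $d(N_t/D_t)$ must be combined and grouped so that the cross products $N_t(A_t + E_t)$ (writing $D_t = A_t + E_t + F$ for the two exponential pieces and the constant) reorganize into $C_1 S_t^2$ and the constant $F = \sigma + C_1 S_0$ reorganizes into the $\sigma S_t$ piece. Once this grouping is carried out, the verification is immediate and the well-definedness of $S_t$ (positivity of $D_t$ on a suitable domain) follows by continuity from $D_0 = \sigma > 0$.
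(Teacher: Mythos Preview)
Your plan is a genuinely different route from the paper's. The paper does not verify the displayed formula; it \emph{derives} it. It first applies the substitution $Y_t=F(S_t)$ with $F(x)=x/(\sigma+C_1x)$, chosen precisely so that the diffusion of $Y_t$ becomes $\sigma Y_t\,dB_t$ (geometric-Brownian type). It then writes the resulting one-dimensional SDE for $Y$ and solves it by a Doss--Sussmann reduction: first the flow ODE $\partial_\omega\Phi=\sigma\Phi$, giving $\Phi(\omega,\xi)=\xi e^{\sigma\omega}$, then a path-wise random ODE for $\xi_t$; finally $Y_t=\xi_t e^{\sigma B_t}$ and $S_t=\sigma Y_t/(1-C_1Y_t)$. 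Your ``alternative route'' via $Y_t=1/S_t$ is close in spirit but uses the wrong substitution: $1/S_t$ does not linearise the diffusion, whereas $S_t/(\sigma+C_1 S_t)$ does, and that is what makes the paper's ODE step tractable.

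Your primary route---direct It\^o verification of $N_t/D_t$---has a concrete gap at the ``key identity.'' Writing $D_t=U_t+V_t+F$ for the two exponential summands and the constant $F=\sigma+C_1S_0$, the diffusion of $D_t$ is $\sigma(U_t+V_t)=\sigma(D_t-F)$, so the $dB_t$-coefficient of $d(N_t/D_t)$ collapses cleanly to $\sigma F\,N_t/D_t^{\,2}$. The target coefficient, however, is $S_t(\sigma+C_1S_t)=N_t(\sigma D_t+C_1N_t)/D_t^{\,2}$. Matching the two requires $\sigma(D_t-F)+C_1N_t\equiv 0$, but a direct check gives
\[
\sigma(D_t-F)+C_1N_t \;=\; \sigma\,\tfrac{\mu}{\gamma}\,C_1S_0\,e^{\sigma B_t}\bigl(e^{\gamma t}-1\bigr),
\]
which vanishes only at $t=0$. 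So the cancellation you anticipate (and your stated identity $\sigma D_t+C_1N_t=\sigma(\sigma+C_1S_0)\cdot(\text{single exponential})$) does not hold; the verification as sketched will not close. Before investing in the bookkeeping you describe, you should recompute this diffusion coefficient carefully and, if the mismatch persists, compare against the paper's constructive derivation to locate where the discrepancy enters.
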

\begin{proof}
	We use the nonlinear transformation method to solve the SDE \eqref{eq:mainmodel}.
	Suppose $Y_t = F(S_t)$ and $F (x)$ is a twice continuously differentiable function. 
	Then it can be readily checked that
	\begin{align*}
	dY_t & = F^{\prime} (S_t) dS_t + \frac{1}{2} F^{\prime \prime} (S_t) d\jkh{S}_t \\
	& = F^{\prime} (S_t) \left[ \mu S_t dt + S_t (\sigma + C_1S_t) dB_t \right] 
	+ \frac{1}{2} F^{\prime \prime} (S_t) S_t^2 (\sigma + C_1 S_t)^2 dt \\
	& = \left[ \mu F^{\prime} (S_t) S_t + \frac{1}{2} F^{\prime \prime} (S_t) S_t^2 (\sigma + C_1 S_t)^2 \right] dt 
	+ F^{\prime} (S_t) S_t (\sigma + C_1S_t) dB_t,
	\end{align*}
	Upon taking 
	\begin{equation}
	\label{eq:solveF}
		F^{\prime} (x) x (\sigma + C_1 x) = \sigma F(x),
	\end{equation}
    we can obtain that
	\begin{equation*}
		F^{\prime} (S_t) S_t (\sigma + C_1S_t) dB_t = \sigma F(S_t) dB_t = \sigma Y_t dB_t.
	\end{equation*}
	And the solution to the ODE \eqref{eq:solveF} can be represented as:
	\begin{equation}
	\label{eq:F}
	F(x) = \frac{Ax}{\sigma + C_1 x},
	\end{equation}
	where $A$ is a constant. 
	Without loss of generality, we fix $A = 1$. 
	Then we have 
	\begin{equation*}
		Y_t = \frac{S_t}{\sigma + C_1 S_t},
	\end{equation*}
	and
	\begin{equation}
	\label{SandY}
		S_t = \frac{\sigma Y_t}{1-C_1 Y_t}.
	\end{equation}
	And it is straightforward to verify that
	\begin{equation*}
		F^{\prime} (x) = \frac{\sigma}{x(\sigma + C_1 x)} F(x),
	\end{equation*}
	and
	\begin{align*}
		F^{\prime \prime} (x) & = \frac{\sigma}{x(\sigma + C_1 x)}F^{\prime}(x) 
		- \frac{\sigma (\sigma + 2 C_1 x)}{x^2 (\sigma + C_1 x)^2} F(x) \\
		& = \frac{\sigma^2}{x^2(\sigma + C_1 x)^2} F(x) - \frac{\sigma^2 + 2 C_1 \sigma x}{x^2 (\sigma + C_1 x)^2} F(x)\\
		& = \frac{-2 C_1\sigma x}{x^2 (\sigma + C_1 x)^2} F(x),
	\end{align*}
	which further implies that
	\begin{equation*}
		\mu F^{\prime} (S_t) S_t = \mu \frac{\sigma}{(\sigma + C_1 S_t)} F(S_t)
		= \frac{\mu \sigma}{(\sigma + C_1 S_t)} Y_t,
	\end{equation*}
	and
	\begin{equation*}
		\frac{1}{2} F^{\prime \prime} (S_t) S_t^2 (\sigma + C_1 S_t)^2 = -F(S_t) C_1\sigma S_t 
		= - \sigma C_1 S_t Y_t.
	\end{equation*}
	Thus, we can obtain that
	\begin{align*}
		dY_t & = Y_t \left[\left( \frac{\mu \sigma}{\sigma + C_1 S_t} - \sigma C_1 S_t \right) dt + \sigma dB_t \right]\\
		&= Y_t\left[\left( \frac{\mu \sigma}{\sigma + C_1  \sigma Y_t/ (1-C_1 Y_t)} - \sigma C_1  \frac{\sigma Y_t}{1-C_1 Y_t} \right) dt + \sigma dB_t \right]\\
		&= Y_t\left[\left( \mu (1-C_1 Y_t) -   \frac{\sigma^2  C_1 Y_t}{1-C_1 Y_t} \right) dt + \sigma dB_t \right]\\
		&= Y_t\left[ \frac{\mu - (2\mu + \sigma^2) C_1 Y_t + \mu C_1^2 Y_t^2 }{1-C_1 Y_t} dt + \sigma dB_t \right].
	\end{align*}
	Next we solve the following SDE.
	\begin{equation}
	\label{eq:YSDE}
		\left\{
		\begin{aligned}
		dY_t &= \frac{\mu - (2\mu + \sigma^2) C_1 Y_t + \mu C_1^2 Y_t^2 }{1-C_1 Y_t} Y_tdt + \sigma Y_tdB_t, \\
		Y_0 &=\frac{S_0}{\sigma + C_1 S_0}.
		\end{aligned}
	\right.
	\end{equation}
	Let $\hat{\sigma}(Y_t) = \sigma Y_t$ and
	\begin{equation*}
		b(Y_t) = \frac{\mu - (2\mu + \sigma^2) C_1 Y_t + \mu C_1^2 Y_t^2 }{1-C_1 Y_t} Y_t.
	\end{equation*}
	We first solve the following ODE.
	\begin{equation}
	\label{eq:YODE}
		\left\{
		\begin{aligned}
		\frac{dy}{d\omega}&=\hat{\sigma}( y(\omega) ) = \sigma y(\omega), \\
		Y_0 &= \xi.
		\end{aligned}
	\right.
	\end{equation}
	And the solution is $y(\omega)=\xi e^{\sigma \omega}$.
	Let $\Phi(\omega, \xi) = y(\omega) = \xi e^{\sigma \omega}$.
	Then we solve the following ODE with the parameter $\omega$.
	\begin{equation*}
	\left\{
		\begin{aligned}
		\frac{d\xi_t}{dt} & = \exp\left\{ - \int_{0}^{B_t(\omega)} \sigma ds \right\} 
		\left( b\left(\Phi(B_t(\omega), \xi_t)\right)
		 - \frac{1}{2} \hat{\sigma}\left( \Phi(B_t(\omega), \xi_t) \right) 
		 \cdot\hat{\sigma}^{\prime}\left( \Phi(B_t(\omega), \xi_t) \right) \right),\\
		\xi_0 & = Y_0.
		\end{aligned}
	\right.
	\end{equation*}
	In fact, the above ODE can be simplified as follows. 
	\begin{equation*}
	\left\{
		\begin{aligned}
		& \frac{2}{2r - \sigma^2} 
		\left( \frac{1}{\xi_t} - \frac{1}{\xi_t - (2r-\sigma^2) / (2r C_1 e^{\sigma B_t(\omega)} ) } \right) d \xi_t = dt,\\
		& \xi_0 = Y_0.
		\end{aligned}
	\right.
	\end{equation*}
	It is straightforward to verify that the solution is
	\begin{equation}
		\xi_t = \frac{Y_0 e^{ \left( \mu - \frac{\sigma^2}{2}\right)t }}{ \frac{2 \mu C_1 Y_0  }{ 2 \mu - \sigma^2} \left( e^{ \left( \mu - \frac{\sigma^2}{2}\right)t } -1 \right)e^{\sigma B_t(\omega)} +1 } .
	\end{equation}
	Thus, we can obtain that
	\begin{equation}
		Y_t(\omega) = \Phi(B_t(\omega), \xi_t(\omega)) = \frac{Y_0 e^{ \left( \mu - \frac{\sigma^2}{2}\right)t +\sigma B_t(\ omega)} }{ \frac{ \mu C_1 Y_0  }{ \mu - \frac{\sigma^2}{2} } \left( e^{ \left( \mu - \frac{\sigma^2}{2}\right)t + \sigma B_t(\omega) } - e^{\sigma B_t(\omega)} \right)  +1 }
	\end{equation}
	is a solution to the SDE \eqref{eq:YSDE}. 
	Finally, it follows from the relationship \eqref{SandY} that
	\begin{equation}
	S_t = \frac{ \sigma S_0 e^{ \gamma t +\sigma B_t}  }{ \left( \frac{ \mu }{ \gamma } -1\right) C_1 S_0 e^{ \gamma t + \sigma B_t } -  \frac{ \mu }{ \gamma } C_1 S_0 e^{\sigma B_t} + \sigma + C_1 S_0 },
	\end{equation}
	where $\gamma = \mu - \frac{\sigma^2}{2} $.
	
	We complete the proof.
\end{proof}

	In Lemma \ref{lem:solution}, we skillfully make nonlinear reductions 
	to the SDE \eqref{eq:theta21}, 
	then directly solve the transformed SDE, 
	and then reversely transform the solution obtained to that of the original model.
	Lemma \ref{lem:solution} not only proves the existence of solutions, 
	but also presents a closed-form expression, 
	which is very helpful for us to price derivatives based on our model.

\subsection{Qualitative Analysis}

	In this subsection, we make a qualitative analysis of our model
	from an economic perspective 
	to show its effectiveness and significance.
	
	{\it Volatility.} 
	The volatility of our VVE model is $v_t=\sigma + C_1 S_t$, where $\sigma$ and $C_1$ are two positive constants. 
	It can describe the stock market with positive correlation between volatility and stock price, 
	which is different from the CEV model. 
	This positive correlation has already been investigated in the literature.
	For instance, Emanuel and Macbeth \cite{Emanuel1982} 
	verified its existence based on the real market data.
	They analyzed the closing price data of Avon products, Eastman Kodak, international business machines and Xerox from 1976 to 1978.
	Through the estimation of parameters in the CEV model, 
	it was concluded that there was a positive correlation between the volatility and price in 1978. 
	Different from financial securities,
	commodity market possesses its intrinsic physical attributes.
	Thus, the price fluctuations of commodities usually show their own distinguishing features.
	More importantly, Geman points out that,
	for the great majority of commodities,
	the volatility is positively related to its price \cite{Geman2009},
	which will be described in detail in the next section. 
	Therefore, our model will be of great use in the pricing and hedging of commodity derivatives.

	{\it Pricing of contingent claims.}
	The derivatives play an important role in the financial field. 
	If measured by the underlying assets, 
	the scale of derivatives market is far larger than the stock market. 
	At present, the total value of the underlying assets of derivatives that have not been closed 
	is many times of the world's total economic output value \cite{Hull2003}.
	Derivative pricing theory is one of the most important research topics in the field of derivatives.
	Under this model, we can derive the explicit pricing formula of European options (given in Section \ref{sec:pricing}). 
	It is worth mentioning that this formula has a similar form to Black-Scholes pricing formula, 
	which can be easily calculated in practice. 
	

\section{Empirical Analysis of Commodity Markets}

\label{sec:analysis}

	In the financial investment market, 
	bulk commodities refer to the commodities that are homogeneous, tradable,
	and widely used as basic industrial raw materials, 
	such as crude oil, non-ferrous metals, steel, agricultural products, iron ore, coal, etc. 
	It includes three categories: agriculturals, metals and energies.
	Commodities constitute the only spot market in human history, 
	which is closely related to the national economy and the people's livelihood. 
	Since most commodities are the basis of industrial production, 
	the changes of futures and spot prices reflecting their supply and demand will directly affect the whole economic system.
	As an important part of the financial market, 
	the research on commodity and their derivatives is of great significance.
	Commodities have both capital and goods attribute, 
	which leads to the great difference between commodities and securities 
	in the price characteristics.
	
	In the commodity market, 
	the theory of storage is often used to explain the spot price volatility, 
	since the supply-demand relationship caused by the physical properties of commodities 
	is the main reason affecting their price changes \cite{Geman2009}.
	
	The most important result of storage theory is that 
	there is a negative correlation between the inventory level and the commodity volatility. 
	Furthermore, according to the theory of supply-demand, 
	the commodity price is negatively correlated with inventory. 
	Therefore, there is a positive correlation between the commodity price and the volatility, 
	which is distinguished from the negative correlation commonly existing in the stock market.
	
	The production capacity and inventory level are two key factors 
	in predicting the commodity prices.
	Fama and French \cite{Fama1987} make statistical analysis on the data of 21 kinds of commodities
	(including wood, animal, metals and agricultural products), 
	and conclude that the variance of commodity prices decreases 
	with the increase of inventory level.
	Geman and Nguyen \cite{Geman2005} use the soybean data of 
	the United States, Brazil and Argentina 
	to reconstruct a monthly, quarterly and annual global soybean database, 
	and show that the commodity price volatility is an increasing linear function 
	of the inverse inventory.
	Geman \cite{Geman2009} points out that it has the same nature in the energy market. 
	When the estimated oil reserves in the United States or other regions decline, 
	the volatility of oil price will increase sharply, 
	and the price will also rise enormously.
	Deaton and Laroque \cite{Deaton1992} analyze and simulate the annual data 
	of 13 commodities, 
	and find that the conditional variance of price is a non-decreasing function of the price.
	To sum up, a series of literatures show that 
	there is a positive correlation between the price and the volatility 
	in commodity markets.
	
	\begin{figure}[htbp!]
		\centering
		\subfigure[Soybean Meal (SM)]{
			\includegraphics[width=0.4\linewidth]{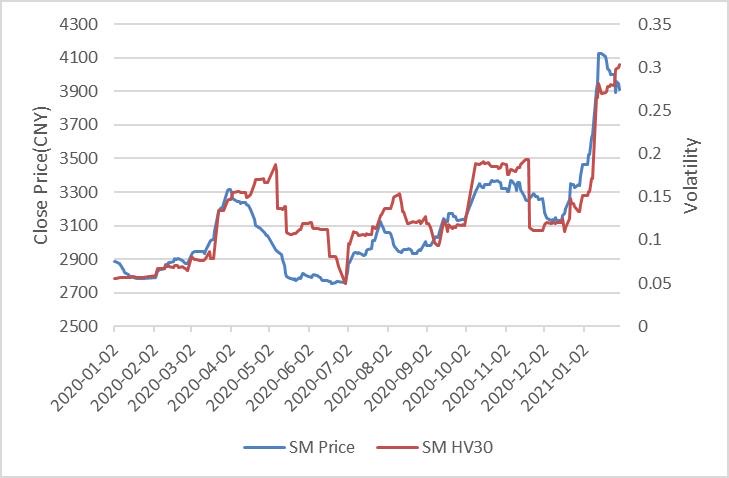}
		}
		\subfigure[Aluminum (AL)]{
			\includegraphics[width=0.4\linewidth]{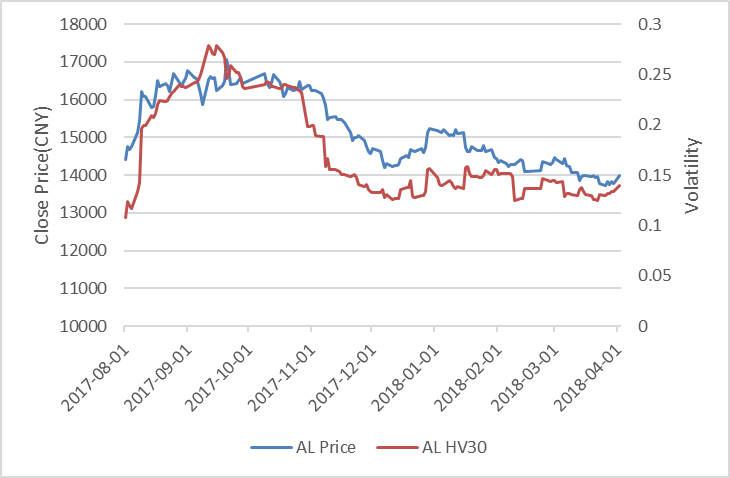}
		}
		\caption{Close price and volatility of soybean meal and aluminum.}
		\label{fig:volatility}
	\end{figure}

	
	\begin{table}[ht!]
		\centering
		\caption{Linear regression results of closing price and volatility data 
			for soybean meal and aluminum.}
		\label{tb:regression} 
		\begin{tabular}{c|c|c} 
			\toprule 
			Commodity & Soybean Meal (SM)  & Aluminum (AL)  \\
			\midrule
			Slope  &  1.53e-04  &  4.61e-05  \\
			\midrule
			P-value of Slope  &  $<$2e-16  &  $<$2e-16   \\
			\midrule
			Intercept  &  -0.346  &  -0.528   \\
			\midrule
			P-value of Intercept  &  $<$2e-16  &  $<$2e-16  \\
			\midrule
			R-squared  &  0.7471  & 0.8207  \\
			\midrule 
			Correlation Coefficient  &  0.8643  &  0.9059  \\
			\bottomrule 
		\end{tabular}
	\end{table}

	Next, we perform an empirical study on China's commodity markets,
	which demonstrates that the linear positive correlation 
	between the volatility and the commodity price does exist in many time intervals.
	Specifically, we select the market data of soybean meal and aluminum for analysis. 
	For soybean meal, we select the closing price data of 263 trading days
	(2020/01/02-2021/01/29).
	And for aluminum, we select the closing price data of 164 trading days
	(2017/08/01-2018/04/02). 
	All the data are collected from wind.
	
	In Figure~\ref{fig:volatility}, the daily frequency closing price data and historical volatility 
	(calculated by using the 30 day price data and recorded as HV30) 
	of the two commodities are depicted respectively. 
	Obviously, the trends of two lines show a positive correlation. 
	Furthermore, we analyze the correlation between closing price and volatility data, 
	and conduct linear fitting using R software. 
	The results are shown in Table~\ref{tb:regression}.
	We have the following observations. 
	Firstly, the correlation coefficients of thees two groups of data are very close to 1.
	Secondly, the values of R-squared for these two linear fittings 
	is 0.7471 and 0.8207, respectively.
	Thirdly, the P-values of slope and intercept are all less than 0.01. 
	Thus, in this time interval, the price and the volatility of these two commodities show a linear positive correlation, 
	which is consistent with the VVE model. 
	We can conclude that the VVE model can be used as 
	an approximate continuous-time model of commodity prices, 
	which is helpful to the study of commodities and commodity derivatives markets.


\section{Option Pricing}

\label{sec:pricing}

	In this section, we consider the option pricing under the VVE model \eqref{eq:mainmodel}. 
	The following theorem provides an explicit pricing formula of European call options.

\begin{theorem}
	\label{thm:pricing}
	Suppose the market is complete and there is a unique equivalent martingale measure. 
	Let $\xi = (S_T - K)^+$ be a reproducible European contingent claim, where $K$ is the exercise price. 
	Then its price process is $V_t = C(t,S_t)$, where
	\begin{align*}
		C(t,x)  =& \sigma S_0 e^{-r(T-t)} E[g(Z)\mathds{1}_{(d,+\infty)}] -K e^{-r(T-t)}(1- N(d)).                   
	\end{align*}
	where $\delta = \frac{r}{r-\frac{\sigma^2}{2} } $, 
	$d=\frac{f_T^{-1}(K)-f_t^{-1}(x)}{\sqrt{T-t}}$, 
	$g(z)=\frac{\sigma S_0  e^{-r(T-t)} }{(\delta -1- \delta e^{-r \delta T}) C_1 S_0+ (\sigma + C_1 S_0) e^{-\sigma (z\sqrt{T-t}+f_t^{-1}(x) )} }$, $f_t^{-1}(x)=\frac{1}{\sigma } \ln{ \left[\left( \left( \frac{ r }{ r - \sigma^2 / 2 } -1\right) e^{ (r - \sigma^2 / 2) t} - \frac{ r }{ r - \sigma^2 / 2  }  \right) \frac{C_1 x }{\sigma} \right] } - \frac{1}{\sigma } ( r - \frac{\sigma^2}{2} ) t $. 
	In addition,  $Z$ obeys standard normal distribution, and $N(d)$ denotes standard normal distribution function.
\end{theorem}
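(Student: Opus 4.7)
The plan is to apply the risk-neutral pricing formula $V_t = e^{-r(T-t)} \E_Q[(S_T-K)^+ \mid \mathcal{F}_t]$ and reduce the conditional expectation to a one-dimensional integral against a standard normal variable by invoking the closed-form solution of Lemma~\ref{lem:solution}. First I would pass to the equivalent martingale measure $Q$: since the market is complete and the equivalent martingale measure is unique by hypothesis, under $Q$ the drift $\mu$ in \eqref{eq:mainmodel} is replaced by the risk-free rate $r$, and the explicit formula from Lemma~\ref{lem:solution} now reads
\begin{equation*}
S_T = \frac{\sigma S_0 e^{\gamma T + \sigma B_T}}{(\delta-1) C_1 S_0 e^{\gamma T + \sigma B_T} - \delta C_1 S_0 e^{\sigma B_T} + \sigma + C_1 S_0},
\end{equation*}
with $\gamma = r - \sigma^2/2$ and $\delta = r/\gamma$.

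Next I would exploit the time-homogeneous Markov property of $S$ to write $C(t,x) = e^{-r(T-t)}\E[(S_T - K)^+ \mid S_t = x]$. The Brownian increment $B_T - B_t$ has the same law as $\sqrt{T-t}\,Z$ with $Z \sim N(0,1)$ and is independent of $\mathcal{F}_t$. Substituting $S_t = x$ and re-parametrising the numerator and denominator of the solution in terms of $B_T - B_t$, the expression becomes a function of $Z$ alone. A direct calculation identifies this function with the map $z \mapsto g(z)$ in the statement (up to the prefactor $\sigma S_0 e^{-r(T-t)}$), after one introduces $f_t^{-1}(x)$ as the value of $B_t$ for which the explicit solution at time $t$ equals $x$; this is precisely the formula given for $f_t^{-1}$.

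Once $S_T$ is written as a monotone increasing function of $Z$, the event $\{S_T > K\}$ translates into $\{Z > d\}$ with $d = (f_T^{-1}(K) - f_t^{-1}(x))/\sqrt{T-t}$. Splitting the payoff gives
\begin{equation*}
\E[(S_T - K)^+ \mid S_t = x] = \E[S_T \mathds{1}_{\{Z > d\}}] - K\,P(Z > d),
\end{equation*}
and the second term yields $K(1 - N(d))$ after standardisation, while the first yields $\sigma S_0 e^{-r(T-t)} \E[g(Z) \mathds{1}_{(d,+\infty)}]$. Multiplying by the discount factor $e^{-r(T-t)}$ and combining gives the announced formula for $C(t,x)$.

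The main obstacle, in my view, is not the probabilistic content, which is essentially just risk-neutral valuation plus the Markov property, but rather the algebraic bookkeeping: one must verify that the map $B_T \mapsto S_T$ is strictly monotone on the relevant domain (so that $\{S_T > K\}$ is genuinely a half-line), check that $f_t^{-1}$ is well-defined (the argument of the logarithm must be positive for the parameter ranges considered), and then patiently massage the explicit solution into the precise form $g(z)$ stated in the theorem. These verifications are routine but delicate, especially because several factors of $\delta$, $e^{-r\delta T}$ and the discount $e^{-r(T-t)}$ must be tracked carefully through the substitution.
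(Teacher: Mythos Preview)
Your proposal is correct and follows essentially the same route as the paper: pass to the risk-neutral measure so that the drift becomes $r$, invoke the explicit solution of Lemma~\ref{lem:solution} to write $S_T=f_T(B_T^*)$, use the Markov property to condition on $S_t=x$ via $B_t^*=f_t^{-1}(x)$, and reduce the expectation to a Gaussian integral that splits into the two displayed terms after the substitution $z=(y-f_t^{-1}(x))/\sqrt{T-t}$. The only cosmetic difference is that the paper writes down the Girsanov density explicitly and integrates against the Brownian transition kernel, whereas you phrase the same computation in terms of the increment $B_T-B_t\overset{d}{=}\sqrt{T-t}\,Z$; your remarks about checking monotonicity of $f_T$ and positivity of the logarithm's argument are points the paper leaves implicit.
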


\begin{proof}
	Since we assume that the market is complete, the discounted price process $(e^{-rt}S_t)$ is a martingale under the risk neutral probability measure $P^*$.
	In fact, $P^*$ can be defined as follows.
	\begin{equation}
	\frac{dP^*}{dP}|_{\mathcal{F}_T} = \exp \left\{ -\int_{0}^{t} \frac{ \mu - r }{\sigma + C_1 S_u}du - \frac{1}{2} \int_{0}^{t} (\frac{ \mu - r }{\sigma + C_1 S_u})^2 du \right\}.
	\end{equation}
	According to Girsanov theorem, we can get that
	\begin{equation}  
	 B_t^* = B_t + \int_{0}^{t} \frac{ \mu - r }{\sigma + C_1 S_u} du 
	\end{equation}
	is a Brownian motion under $P^*$, and the SDE \eqref{eq:mainmodel} becomes
	\begin{equation}
	\label{eq:risk neural}
		dS_t = S_t \left[ r dt +  \left(\sigma + C_1 S_t\right) dB_t^* \right].
	\end{equation}
	It follows from Lemma \ref{lem:solution} that the solution of \eqref{eq:risk neural} is 
	\begin{align*}
		S_t &= \frac{ \sigma Y_0 e^{ \left( r - \frac{\sigma^2}{2}\right)t +\sigma B_t^*}  }{ \left( \frac{ r }{ r - \frac{\sigma^2}{2} } -1\right) C_1 Y_0 e^{ \left( r - \frac{\sigma^2}{2}\right)t + \sigma B_t^* } -  \frac{ r }{ r - \frac{\sigma^2}{2} } C_1 Y_0 e^{\sigma B_t^*} +1 }\\
		&=:f_t(B_t^*).
	\end{align*}
	According to the risk neutral pricing formula, 
	we can obtain that the price of reproducible European contingent claim $\xi$ in time $t$ is 
	\begin{equation*}
	V_t = E^* \left[ e^{-r(T-t)} (S_T-K)^+ | \mathcal{F}_t \right].
	\end{equation*}
	By the Markov property of the diffusion process $(S_t)$, we have
	\begin{equation*}
	V_t = E^* \left[ e^{-r(T-t)} (S_T-K)^+ | S_t \right].
	\end{equation*}
	Thus, we can denote $V_t = C(t,S_t)$. Then
	\begin{align*}
	C(t,x) =& E^* \left[ e^{-r(T-t)} (S_T-K)^+ | S_t =x \right]\\
	=& e^{-r(T-t)} E^* \left[ (f_T(B_T^*)-K)^+ | B_t^* = f_t^{-1}(x) \right]\\
	=& e^{-r(T-t)} \int_{-\infty}^{+\infty} (f_T(y)-K)^+ \cdot P_{B^*}(t, f_t^{-1}(x); T,y) dy\\
	=& e^{-r(T-t)} \int_{f_t^-(K)}^{+\infty} (f_T(y)-K) \cdot \frac{1}{ \sqrt{2\pi (T-t)} } e^{-\frac{(y-f_t^{-1}(x))^2}{2(T-t)}} dy\\
	=& e^{-r(T-t)} \int_{f_T^{-1}(K)}^{+\infty} \left( \frac{ \sigma Y_0 e^{ r\delta T +\sigma y}  }{ (\delta -1) C_1 Y_0 e^{ r\delta T + \sigma y } - \delta C_1 Y_0 e^{\sigma y} +1 } -K \right) \cdot \frac{1}{ \sqrt{2\pi (T-t)} } e^{-\frac{(y-f_t^{-1}(x))^2}{2(T-t)}} dy\\
	=& \frac{ e^{-r(T-t)} }{ \sqrt{2\pi (T-t)} }  \int_{f_T^{-1}(K)}^{+\infty}  \frac{ \sigma Y_0 e^{ r \delta T } e^{ \sigma y-\frac{(y-f_t^{-1}(x))^2}{2(T-t)}} }{ ( \delta-1 ) C_1 Y_0 e^{ r \delta T + \sigma y } -  \delta C_1 Y_0 e^{\sigma y} +1 }  dy\\
	& - K e^{-r(T-t)} \int_{f_T^{-1}(K)}^{+\infty} \frac{1}{ \sqrt{2\pi (T-t)} } e^{-\frac{(y-f_t^{-1}(x))^2}{2(T-t)}} dy\\
	=& \frac{\sigma e^{-r(T-t)}}{ C_1 \delta (1-e^{-r\delta T})-C_1 } \int_{d}^{+\infty}  \frac{1}{1+\frac{1}{ C_1 Y_0 \left[ (\delta -1) e^{r\delta T} - \delta \right] }e^{-\sigma (z\sqrt{T-t}+f_t^{-1}(x) )}} \frac{1}{\sqrt{2\pi}} e^{-\frac{z^2}{2}}dz\\
	& -K e^{-r(T-t)} \int_{d}^{+\infty} \frac{1}{\sqrt{2\pi}} e^{-\frac{z^2}{2}}dz\\
	=& \sigma S_0 e^{-r(T-t)} \int_{d}^{+\infty}  \frac{\sigma S_0  e^{-r(T-t)} }{(\delta -1- \delta e^{-r \delta T}) C_1 S_0+ (\sigma + C_1 S_0) e^{-\sigma (z\sqrt{T-t}+f_t^{-1}(x) )} } \frac{1}{\sqrt{2\pi}} e^{-\frac{z^2}{2}}dz\\
	& -K e^{-r(T-t)}(1- N(d))\\
	=& \sigma S_0 e^{-r(T-t)} E[g(Z)\mathds{1}_{(d,+\infty)}] -K e^{-r(T-t)}(1- N(d)),
	\end{align*}
	where $\delta = \frac{r}{r-\frac{\sigma^2}{2} } $, 
	$d=\frac{f_T^{-1}(K)-f_t^{-1}(x)}{\sqrt{T-t}}$, $g(z)=\frac{\sigma S_0  e^{-r(T-t)} }{(\delta -1- \delta e^{-r \delta T}) C_1 S_0+ (\sigma + C_1 S_0) e^{-\sigma (z\sqrt{T-t}+f_t^{-1}(x) )} }$, and
	$f_t^{-1}(x)=\frac{1}{\sigma } \ln{ \left[\left( \left( \frac{ r }{ r - \sigma^2 / 2 } -1\right) e^{ (r - \sigma^2 / 2) t} - \frac{ r }{ r - \sigma^2 / 2  }  \right) \frac{C_1 x }{\sigma} \right] } - \frac{1}{\sigma } (r - \frac{\sigma^2}{2} ) t $. In addition, $N(d)$ denotes standard normal distribution function, and $Z$ obeys standard normal distribution.
\end{proof}

\section{Conclusion}

\label{sec:conclusion}

	In this paper, we find that the volatility elasticity of the CEV model 
	can not be simply treated as a constant.
	To address this issue, 
	we use stochastic analysis tools to deduce the stochastic differential equation model 
	when the volatility elasticity is a constant. 
	Since the volatility elasticity for most of risk assets is not unchanged,  
	we then extend this model to the generic case when the volatility elasticity is time-varying.
	Our model can describe the positive correlation between the volatility and the asset price, 
	which often occurs in the commodity markets.
	In contrast, the CEV model can only describe the negative correlation.
	These theoretical findings are validated by actual market data.
	Furthermore, we derive the explicit pricing formula of European options based on our model, 
	which has the similar form with Black-Scholes formula and is easy to calculate.
	This formula has an important guiding significance in the applications of 
	derivatives pricing in the actual commodity markets.

\bibliographystyle{siam}
\bibliography{VE_Wangting}

\addcontentsline{toc}{section}{References}

\end{document}